\newtheorem{defn}{Definition}
\newtheorem{thm}{Theorem}
\newenvironment{proof}{\paragraph{Proof:}}{\hfill$\square$}
\begin{document}
\title{Multigrid Backprojection Super--Resolution\\ and Deep Filter Visualization}
\author{Pablo Navarrete Michelini, Hanwen Liu and Dan Zhu\\
BOE Technology Group Inc., Ltd.\\
Beijing, China
}
\nocopyright
\maketitle
\begin{abstract}
We introduce a novel deep--learning architecture for image upscaling by large factors (e.g. $4\times$, $8\times$) based on examples of pristine high--resolution images. Our target is to reconstruct high--resolution images from their downscale versions. The proposed system performs a multi--level progressive upscaling, starting from small factors ($2\times$) and updating for higher factors ($4\times$ and $8\times$). The system is recursive as it repeats the same procedure at each level. It is also residual since we use the network to update the outputs of a classic upscaler. The network residuals are improved by Iterative Back--Projections (IBP) computed in the features of a convolutional network. To work in multiple levels we extend the standard back--projection algorithm using a recursion analogous to Multi--Grid algorithms commonly used as solvers of large systems of linear equations. We finally show how the network can be interpreted as a standard upsampling--and--filter upscaler with a space--variant filter that adapts to the geometry. This approach allows us to visualize how the network learns to upscale. Finally, our system reaches state of the art quality for models with relatively few number of parameters.
\end{abstract}

\section{Introduction}
\label{intro}
In this work, we focus on the problem of image upscaling using convolutional networks. Upscaling signals by integer factors (e.g. $2\times$, $3\times$) is understood in classical interpolation theory as two sequential processes: upsample (insert zeros) and filter \cite{JGProakis_2007a,SMallat_1998a}. Standard upscaler algorithms, such as Bicubic or Lanczos, find high--resolution images with a narrow frequency content by using fixed low--pass filters. Similar to the classic upscaling model, the image acquisition can be modeled as low-pass filtering a high resolution image and then downsample the result (drop pixels). In test scenarios often used in benchmarks we actually know the exact downscaling model, e.g. Bicubic downscaler. The Iterative Back--Projection (IBP) algorithm \cite{Irani_1991a} is often used to enforce the downscaling model for a given upscaler and get closer to the original image.

More advanced upscalers follow geometric principles to improve image quality. For example, \emph{edge--directed interpolation} uses adaptive filters to improve edge smoothness \cite{VRAlgazi_1991a,XLi_2001a}, or \emph{bandlet} methods use both adaptive upsampling and filtering \cite{SMallat_2007a}. More recently, machine learning has been able to use examples pairs of high and low resolution images to estimate the parameters of upscaling systems \cite{SCPark_2003a}. In some cases, the optimization approach of machine learning hides the connection with classical interpolation theory, e.g. sparse representation with dictionaries \cite{JYang_2008a,JYang_2010a}. In other cases, the adaptive filter approach is explicit, e.g. RAISR \cite{DBLP:journals/corr/RomanoIM16}.

Upscaling using convolutional networks started with SRCNN \cite{CDong_2014a,CDong_2015a} motivated by the success of deep--learning methods in image classification tasks \cite{YLeCun_2015a} and establishing a strong connection with sparse coding methods \cite{JYang_2008a,JYang_2010a}. SRCNN has later been improved, most notably by EDSR \cite{Lim_2017_CVPR_Workshops} and DBPN \cite{DBPN2018}. Our system shares the convolutional network approach but follows a different motivation. Namely, we aim to reveal a strong connection between convolutional--networks and classical image upscaling. By doing so, we can recover the classic interpretation of upsampling and filter and visualize what is the network doing pixel by pixel. Thus, we aim to prove that convolutional networks are a natural and convenient choice for Super--Resolution (SR) tasks.

Our main contributions are:
\begin{itemize}
    \item We extend the IBP algorithm to a \textbf{multi--level IBP}.
    \item We prove that our algorithm \textbf{works as well as classic IBP}, based on an unrealistic model.
    \item We introduce a \textbf{new network architecture} that overcomes the unrealistic model and allows us to learn both upscaling and downscaling.
    \item We introduce a novel \textbf{algorithm to analyze} the linear components of the network.
    \item We show how to \textbf{interpret the network} as a standard upscaler with adaptive filters.
\end{itemize}

\section{Related Work}
\label{sec:related}
We consider the following two architectures as the most similar to our system:\pagebreak
\begin{itemize}
    \item \textbf{Multi--Scale Laplacian Super--Resolution (MSLapSR)} \cite{MSLapSRN}: Our system is inspired by MSLapSR to progressively upscale images using a classic upscaler and network updates, before back--projections that improve network updates in lower--resolutions. Both MSLapSR and our systems include \emph{analysis} and \emph{synthesis} networks to convert images to latent space and vice versa. Both systems share parameters at each scale. Our system differs mostly in the use of back--projections, that cannot be removed to recover MSLapSR because of the particular structure of our \emph{upscaler} network module.
    \item \textbf{Deep Back--Projection Network} (DBPN) \cite{DBPN2018}: To the extent of our knowledge, this is the first reference to use IBP in a network architecture. It is also the state--of--art in terms of image quality, surpassing EDSR\cite{Lim_2017_CVPR_Workshops}, former winner of NTIRE 2017 SR Challenge \cite{Timofte_2017_CVPR_Workshops}. Their approach to use back--projections is different than ours because: first, their system is not multi--scale; and second, they iterate down and up projections. Our multi--scale architecture requires less number of parameters, because we reuse modules at every scale, and it is more flexible, because many upscaling factors can be achieved with the same modules. Also, our system is built upon an algorithm that is proven to converge, whereas, to the extent of our knowledge, mixing up and down projections has no convergence guarantees.
\end{itemize}
\begin{algorithm*}[th]
    \centering
    \begin{tabular}{ll}
        $\boldsymbol{MGBP}(X,\mu,L)$: & $\boldsymbol{BP^{\mu}_{k}}(u, Y_1,\ldots,Y_{k-1})$: \\

        \resizebox{.5\textwidth}{!}{
            \begin{minipage}{.6\textwidth}
            \begin{algorithmic}[1]
                \REQUIRE Input image $X$.
                \REQUIRE Integers $\mu\geqslant 0$ and $L\geqslant 1$.
                \ENSURE Images $Y_k$, $k = 2,\ldots,L$.

                \STATE $Y_1 = X$
                \FOR{$k = 2,\ldots,L$}
                    \STATE $u = (Y_{k-1}\uparrow s) * p$
                    \STATE $Y_{k} = BP^{\mu}_{k}\left(u, Y_1,\ldots,Y_{k-1} \right)$
                \ENDFOR
            \end{algorithmic}
        \end{minipage}
        }
        &
        \resizebox{.5\textwidth}{!}{
            \begin{minipage}{0.6\textwidth}
            \begin{algorithmic}[1]
                \REQUIRE Image $u$, level index $k$, number of steps $\mu$.
                \REQUIRE Images $Y_1,\ldots,Y_{k-1}$ (only for $k>1$).
                \ENSURE Updated image $u$

                \IF{$k > 1$}
                    \FOR{$step = 1,\ldots,\mu$}
                        \STATE $d = BP^{\mu}_{k-1}\left( {\color{blue}(u * g)\downarrow s}, Y_1,\ldots,Y_{k-2} \right)$
                        \STATE $u = u + {\color{red}(Y_{k-1} - d)\uparrow s * p}$
                    \ENDFOR
                \ENDIF
            \end{algorithmic}
            \end{minipage}
        }
    \end{tabular}
    \caption{Multi--Grid Back--Projection (MGBP)} \label{tab:alg}
    \label{alg:mgbp_classic}
\end{algorithm*}
\begin{figure*}[t]
  \centering
  \includegraphics[width=\linewidth]{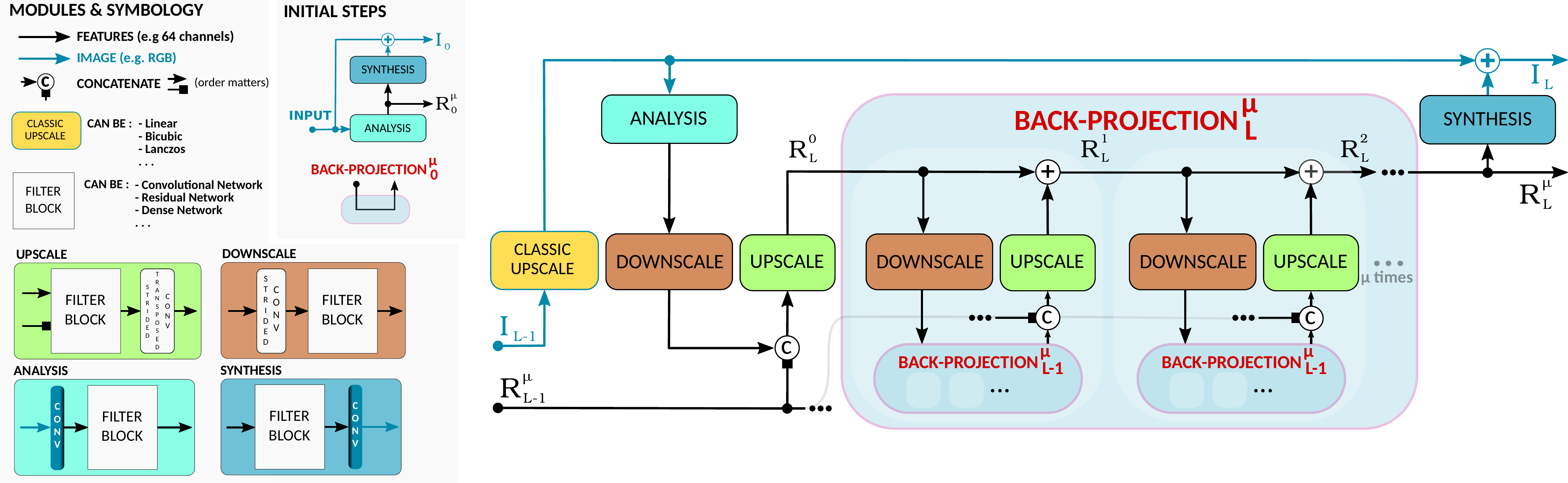}
  \caption{
  Multigrid Back--Projection (MGBP) system architecture. Lines indicate a number of image channels moving to different processing blocks, and line colors indicate the number of channels. Blue represents $3$ channels for RGB images, and black represents the number of features managed by convolutional networks (latent space). At level $k$ the current upscaled image is indicated as $I_k$ and residuals in latent space are indicated as $R_k$. We use \emph{Analysis} and \emph{Synthesis} modules to transfer images into latent space and vice versa. In \emph{Initial Steps} we obtain the first pair of output image $I_0$ and residual $R_0$, from the input image. Then, the main diagram shows how to obtain the pair $I_L$ and $R_L$ from the previous pair $I_{L−1}$ and $R_{L-1}$. In red we show the Back--Projection module, which repeats mu back--projection steps recursively.
  \label{fig:sys_diagram}}
\end{figure*}
\begin{figure*}[h]
  \centering
  \includegraphics[width=\linewidth]{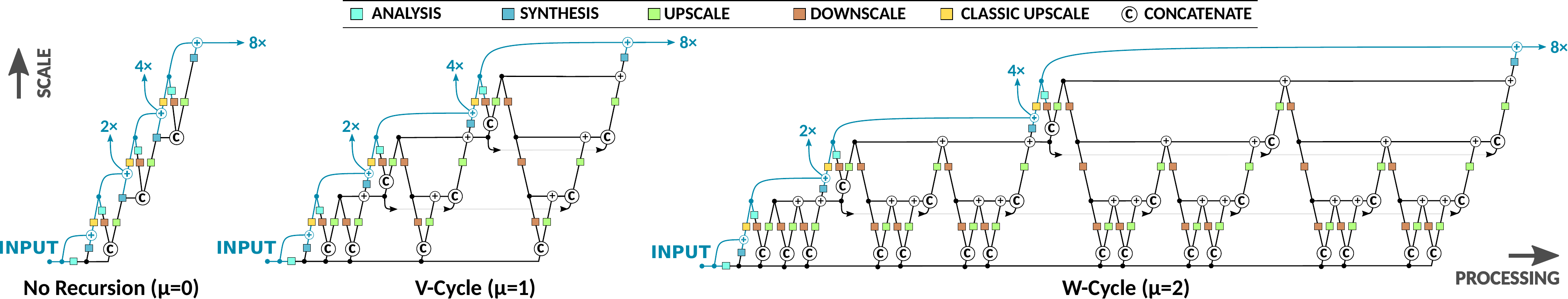}
  \caption{Multi--Grid Back--Projection (MGBP) recursion unfold from Figure \ref{fig:sys_diagram} for different values of $\mu$, and three levels to output $2\times$, $4\times$ and $8\times$ upscale images. Our system uses a recursion analogous to the Full--Multigrid algorithm to solve linear equations and leads to the well known workflows V--cycle for $\mu=1$ and W--cycles for $\mu=2$ \cite{UTrottenberg_2000a}. After every upscaling step, the MGBP recursion sends the output down to every lower--resolution level in order to validate the downscaling model. The corrections are back-projected to higher resolutions. \label{fig:sys_workflow}}
\end{figure*}
Our contributions add to the following lines of research:
\begin{itemize}
    \item \textbf{Recursive Architectures}: Extensive work has been done on recursive CNN architectures inspired on iterative procedures. Our system belongs to this line of work since we propose an architecture based on IBP and multi--grid iterations. Of particular interest is the research along this line related to SR and enhancements problems. In \cite{Kruse_2017_ICCV}, for example, a \emph{half quadratic splitting} iteration with a CNN denoiser prior is proposed for image denoising, deblurring and SR. Most recently, \cite{DBLP:journals/corr/abs-1803-05215} take a similar approach for image demosaicking. Next, \cite{DBLP:journals/corr/abs-1804-03368} propose to learn an optimizer for image deconvolution by recurrently incorporating CNNs into a gradient descent scheme, which could be complementary to our approach. In \cite{Zhang_2017_CVPR}, an energy minimization iteration is performed using a CNN model for image deconvolution, which is close to the SR problem. Finally, the work of \cite{DBLP:journals/corr/DiamondSHW17} follows a similar motivation than ours by proposing a framework to incorporate knowledge of the image formation into CNNs, applied to denoising, deblurring and compressed sensing.

    All these approaches differ to ours in the fact that our recursion iterates back and forth between different resolutions. In numerical methods, such iterations have been used by two types of linear equation solvers: \emph{multi--grid} and \emph{domain decomposition} methods \cite{UTrottenberg_2000a,Widlund_DD}. We follow the multi--grid approach, that can use inductive arguments to study convergence, and leads to specific processing workflows to move intermediate results between scales (see for example the W--cycle in Figure \ref{fig:sys_workflow}). Systems like MS--DenseNet \cite{DBLP:journals/corr/HuangCLWMW17} also move back and forth between scales with more simplified workflows but they were not designed based on classical methods. The connection to classical methods gives us a justification of these workflows that, otherwise, would be arbitrary (e.g. why traversing scales with V or W workflows in Figure \ref{fig:sys_workflow})? which workflow is better?) We will show, for example, how different numbers of back--projections (related to depth) make outputs sharper, same as in IBP. Our main contribution here is to devise a new algorithm that introduces the multi--grid recursion into IBP. It is a different algorithm than IBP, that we prove to converge at the same rate, and then extend to a network architecture. Finally, this effort pays back since our recursion works well in experiments, with no other method reaching the same quality with the same number of parameters.
    \item \textbf{Network Visualization}:
    A major direction of research in deep--learning is how to visualize the inner processing of a given architecture \cite{Zhang_DLVisualInterpretability}. Among these, a line of research on \emph{feature visualization} studies what does a network detect \cite{olah2017_featurevis}. Feature visualization can give example inputs that cause desired behaviors, separating image areas causing behavior from those that only relate to the causes. Another line of work on \emph{attribution} studies how does a network assembles these individual pieces to arrive at later decisions, or why these decisions were made \cite{olah2018_buildingblocks}. Our visualization technique belongs to the latter because we can show how the input pixels are assembled into a particular output pixel. We target the SR problem where there is extensive knowledge of non--adaptive filters (e.g. linear, bicubic, etc.) built upon signal processing theory. The main novelty of our technique is that it provides an alternative system to replace the network for a given input. The new system generates the exact same outputs of the network from the same input images but, unlike the network, it is fully interpretable in the sense that we know what to expect from its parameters.
\end{itemize}

\section{Multigrid Backprojections}
\label{sec:MGBP}
A simple and common model for the downscaling process is
\begin{equation}
    X = (Y*g)\downarrow s \;, \label{eq:model_down}
\end{equation}
where $Y$ is the high--resolution source, $X$ is the low--resolution result, $g$ is a blurring kernel and $\downarrow s$ is a downsampling by factor $s$.

Model \eqref{eq:model_down} gives additional information about the unknown high--resolution image and narrows down the search space from all possible images to images that downscaled with model \eqref{eq:model_down} recover the low--resolution input image. This is the motivation behind the classic IBP algorithm \cite{Irani_1991a}. Given model \eqref{eq:model_down} and an upscaled image $Y$, the IBP algorithm iterates:
\begin{eqnarray}
    e(Y_k) & = & X-(Y_k*g)\downarrow s \\
    Y_{k+1} & = & Y_k+e(Y_k)\uparrow s * p \;.
\end{eqnarray}
Here, $e(Y_k)$ is the mismatch error at low--resolution, $g$ and $p$ are blurring and upscaling filters, respectively. The iteration is proven to converge, to enforce model \eqref{eq:model_down} at exponential rate \cite{Irani_1991a}.

To make IBP work for multiple scales we change model \eqref{eq:model_down} to:
\begin{equation}
    X = (\cdots ((Y \underbrace{* g)\downarrow s * g)\downarrow s \cdots *g)\downarrow s}_{L \; \text{times}} \;. \label{eq:mgbp_model}
\end{equation}
This is not a common downscaling procedure in practice and might be the reason why multi--scale IBP has not been considered yet. We will later replace the downscaling by a network so that model \eqref{eq:mgbp_model} becomes flexible and is able to learn a direct downscaling or even more complex models.

Upscaling images with IBP is a two--step process: first, upscale an image; and second, improve it with IBP. This is reminiscent of the way a Full--Multigrid algorithm solves linear equations \cite{UTrottenberg_2000a}. This is: first, find an approximate solution; and second, improve it by solving an equation for the approximation error. Both IBP and Multigrid iterate between different scales, but IBP only uses two levels whereas Multigrid recursively move to coarser grids. We use the same strategy as in Multigrid to define a so--called \emph{Multi--Grid Back--Projection} algorithm as shown in Algorithm \ref{alg:mgbp_classic}.
Here, back--projections recursively return to the lowest--resolution enforcing the downscaling model at each scale.

For $L=2$ the MGBP algorithm and model \eqref{eq:mgbp_model} are equivalent to the original IBP \cite{Irani_1991a} and model \eqref{eq:model_down}, and thus converges at exponential rate. In section A of the supplementary material we prove convergence for $L>2$. Basically, the algorithm inherits the exponential rate convergence from the two--level case through the recursion in Algorithm \ref{alg:mgbp_classic}.

\section{Network Architecture}
\label{sec:network}
We convert the Multi--Grid Back--Projection algorithm into a network structure as follows:
\begin{itemize}
    \item Step 1: Use a classic method to upscale a low resolution image.
    \item Step 2: Transfer the upscale image $I$ into latent space using a network $\text{Analysis}(I)$.
    \item Step 3: In latent space we apply the recurrence in Algorithm \ref{alg:mgbp_classic} by changing:
    \begin{itemize}
        \item ${\color{blue}(u * g)\downarrow s}$ into a network ${\color{blue}\text{Downscale}(u)}$.
        \item ${\color{red}(Y_{k-1} - d)\uparrow s * p}$ into a network ${\color{red}\text{Upscale}(\left[Y_{k-1}, d\right])}$. Where $\left[Y_{k-1}, d\right]$ is the concatenation of features and replaces the subtraction. Thus, the \emph{Upscaler} network receives double the number of features in the input compared to the output.
    \end{itemize}
\end{itemize}
Figure \ref{fig:sys_diagram} shows the definition of our network architecture. The unfolded recursion is shown in Figure \ref{fig:sys_workflow} for three different values of $\mu$.

\section{Deep Filter Visualization}
\label{sec:visualization}
Deep Learning architectures are highly non--linear, although much of their internal structure is linear (e.g. convolutions). We want to study the overall effect of the linear structure of the network. The general procedure is shown in Figure \ref{fig:vis_diagram} and is as follows:
\begin{itemize}
    \item An input image $X$ passes through all layers of the network, and outputs an image $Y$.
    \item At each non-linear layer we record how much did the input change (layer gain).
    \item We replace (freeze) all non--linearities by the fix gain previously recorded. The overall system becomes linear, such that $Y = FX+R$.
    \item We obtain the \emph{effective residual} $R$ with an input $X=0$ in the activation frozen system.
    \item We obtain the \emph{effective filter} for a particular input pixel by using a $\delta$ input centered at the pixel location, and subtract the residual $R$ from the output. In linear system terminology we are computing the \emph{impulse response} of the system \cite{JGProakis_2007a}.
\end{itemize}

We can obtain explicit formulas for $F$ and $R$ if we consider a model of convolutional network as a sequence of linear and nonlinear layers:
\begin{equation}
    z_n = W_n x_{n-1} + b_n \quad\quad\text{and}\quad\quad x_n = \sigma\left(z_n\right) \;, \label{eq:cnn_model}
\end{equation}
where $x_n$ and $z_n$ are vectorized features at layer $n$, after and before activations, respectively. The parameters of the network are the biases $b_n$ and the sparse matrices $W_n$ representing the convolutional operators (also applies for strided and transposed convolutions). For a given input image $X$, the input of the model is $x_0=\text{vec}(X)$ (vectorized image). The output image $Y$, after $n$ convolutional layers, is $\text{vec}^{-1}(x_n)$.
\begin{table*}[ht]
  \caption{\textbf{Quantitative evaluation} of different SR methods. In {\color{red} red} color we show the best result, in {\color{blue} blue} the top--$2$ and {\color{brown} brown} the top--$3$ results for each column and scale. Methods are ordered by increasing number of parameters.}
  \label{tab:quantitative}
  \centering
\resizebox{\linewidth}{!}{
  \begin{tabular}{lcccccccccc}\hline
    Algorithm                           & $s$  & $par$ & \multicolumn{2}{c}{Set14} & \multicolumn{2}{c}{BSDS100} & \multicolumn{2}{c}{Urban100} & \multicolumn{2}{c}{Manga109} \\
                                        &      & $[M]$ & PSNR & SSIM & PSNR & SSIM & PSNR & SSIM & PSNR & SSIM \\ \hline
    Bicubic                             & 2 & -- & $30.34$ & $0.870$ & $29.56$ & $0.844$ & $26.88$ & $0.841$ & $30.84$ & $0.935$ \\
    A+~\cite{Timofte_2014a}             & 2 & -- & $32.40$ & $0.906$ & $31.22$ & $0.887$ & $29.23$ & $0.894$ & $35.33$ & $0.967$ \\
    FSRCNN~\cite{Dong_2016a}            & 2 & $0.01$ & $32.73$ & $0.909$ & $31.51$ & $0.891$ & $29.87$ & $0.901$ & $36.62$ & $0.971$ \\
    SRCNN~\cite{Dong_2014a}             & 2 & $0.06$ & $32.29$ & $0.903$ & $31.36$ & $0.888$ & $29.52$ & $0.895$ & $35.72$ & $0.968$ \\
    MSLapSRN~\cite{MSLapSRN}            & 2 & $0.22$ & {\color{brown}$33.28$} & {\color{blue}$0.915$} & {\color{brown}$32.05$} & {\color{brown}$0.898$} & $31.15$ & $0.919$ & $37.78$ & {\color{brown}$0.976$} \\
    \rowcolor{lightgray} Our            & 2 & $0.28$ & $33.27$ & {\color{blue}$0.915$} & $31.99$ & $0.897$ & {\color{brown}$31.37$} & {\color{brown}$0.920$} & {\color{brown}$37.92$} & {\color{brown}$0.976$} \\
    VDSR~\cite{Kim_2016_VDSR}           & 2 & $0.67$ & $32.97$ & $0.913$ & $31.90$ & $0.896$ & $30.77$ & $0.914$ & $37.16$ & $0.974$ \\
    LapSRN~\cite{LapSRN}                & 2 & $0.81$ & $33.08$ & $0.913$ & $31.80$ & $0.895$ & $30.41$ & $0.910$ & $37.27$ & $0.974$ \\
    DRCN~\cite{Kim_2016_DRCN}           & 2 & $1.78$ & $32.98$ & $0.913$ & $31.85$ & $0.894$ & $30.76$ & $0.913$ & $37.57$ & $0.973$ \\
    D-DBPN~\cite{DBPN2018}              & 2 & $5.95$ & {\color{blue}$33.85$} & {\color{red}$0.919$} & {\color{blue}$32.27$} & {\color{blue}$0.900$} & {\color{blue}$32.70$} & {\color{blue}$0.931$} & {\color{blue}$39.10$} & {\color{red}$0.978$} \\
    EDSR~\cite{Lim_2017_CVPR_Workshops} & 2 & $40.7$ & {\color{red}$33.92$} & {\color{red}$0.919$} & {\color{red}$32.32$} & {\color{red}$0.901$} & {\color{red}$32.93$} & {\color{red}$0.935$} & {\color{red}$39.10$} & {\color{blue}$0.977$} \\
    \noalign{\smallskip}\hline\noalign{\smallskip}
    Bicubic                             & 4 & -- & $26.10$ & $0.704$ & $25.96$ & $0.669$ & $23.15$ & $0.659$ & $24.92$ & $0.789$ \\
    A+~\cite{Timofte_2014a}             & 4 & -- & $27.43$ & $0.752$ & $26.82$ & $0.710$ & $24.34$ & $0.720$ & $27.02$ & $0.850$ \\
    FSRCNN~\cite{Dong_2016a}            & 4 & $0.01$ & $27.70$ & $0.756$ & $26.97$ & $0.714$ & $24.61$ & $0.727$ & $27.89$ & $0.859$ \\
    SRCNN~\cite{Dong_2014a}             & 4 & $0.06$ & $27.61$ & $0.754$ & $26.91$ & $0.712$ & $24.53$ & $0.724$ & $27.66$ & $0.858$ \\
    MSLapSRN~\cite{MSLapSRN}            & 4 & $0.22$ & $28.26$ & $0.774$ & {\color{brown}$27.43$} & $0.731$ & $25.51$ & $0.768$ & $29.54$ & $0.897$ \\
    \rowcolor{lightgray} Our            & 4 & $0.28$ & {\color{brown}$28.43$} & {\color{brown}$0.778$} & $27.42$ & {\color{brown}$0.732$} & {\color{brown}$25.70$} & {\color{brown}$0.774$} & {\color{brown}$30.07$} & {\color{brown}$0.904$} \\
    VDSR~\cite{Kim_2016_VDSR}           & 4 & $0.67$ & $28.03$ & $0.770$ & $27.29$ & $0.726$ & $25.18$ & $0.753$ & $28.82$ & $0.886$ \\
    LapSRN~\cite{LapSRN}                & 4 & $0.81$ & $28.19$ & $0.772$ & $27.32$ & $0.728$ & $25.21$ & $0.756$ & $29.09$ & $0.890$ \\
    DRCN~\cite{Kim_2016_DRCN}           & 4 & $1.78$ & $28.04$ & $0.770$ & $27.24$ & $0.724$ & $25.14$ & $0.752$ & $28.97$ & $0.886$ \\
    D-DBPN~\cite{DBPN2018}              & 4 & $10.4$ & {\color{red}$28.82$} & {\color{blue}$0.786$} & {\color{red}$27.72$} & {\color{blue}$0.740$} & {\color{blue}$26.54$} & {\color{blue}$0.795$} & {\color{red}$31.18$} & {\color{blue}$0.914$} \\
    EDSR~\cite{Lim_2017_CVPR_Workshops} & 4	& $43.1$ & {\color{blue}$28.80$} & {\color{red}$0.788$} & {\color{blue}$27.71$} & {\color{red}$0.742$} & {\color{red}$26.64$} & {\color{red}$0.803$} & {\color{blue}$31.02$} & {\color{red}$0.915$}\\
    \noalign{\smallskip}\hline\noalign{\smallskip}
    Bicubic                             & 8 & -- & $23.19$ & $0.568$ & $23.67$ & $0.547$ & $20.74$ & $0.516$ & $21.47$ & $0.647$ \\
    A+~\cite{Timofte_2014a}             & 8 & -- & $23.98$ & $0.597$ & $24.20$ & $0.568$ & $21.37$ & $0.545$ & $22.39$ & $0.680$ \\
    FSRCNN~\cite{Dong_2016a}            & 8 & $0.01$ & $23.93$ & $0.592$ & $24.21$ & $0.567$ & $21.32$ & $0.537$ & $22.39$ & $0.672$ \\
    SRCNN~\cite{Dong_2014a}             & 8 & $0.06$ & $23.85$ & $0.593$ & $24.13$ & $0.565$ & $21.29$ & $0.543$ & $22.37$ & $0.682$ \\
    MSLapSRN~\cite{MSLapSRN}            & 8 & $0.22$ & $24.57$ & $0.629$ & $24.65$ & {\color{brown}$0.592$} & $22.06$ & $0.598$ & $23.90$ & $0.759$ \\
    \rowcolor{lightgray} Our            & 8 & $0.28$ & {\color{brown}$24.82$} & {\color{brown}$0.635$} & {\color{brown}$24.67$} & {\color{brown}$0.592$} & {\color{brown}$22.21$} & {\color{brown}$0.603$} & {\color{brown}$24.12$} & {\color{brown}$0.765$} \\
    VDSR~\cite{Kim_2016_VDSR}           & 8 & $0.67$ & $24.21$ & $0.609$ & $24.37$ & $0.576$ & $21.54$ & $0.560$ & $22.83$ & $0.707$ \\
    LapSRN~\cite{LapSRN}                & 8 & $0.81$ & $24.44$ & $0.623$ & $24.54$ & $0.586$ & $21.81$ & $0.582$ & $23.39$ & $0.735$ \\
    D-DBPN~\cite{DBPN2018}              & 8 & $23.2$ & {\color{red}$25.13$} & {\color{red}$0.648$} & {\color{red}$24.88$} & {\color{red}$0.601$} & {\color{red}$22.83$} & {\color{red}$0.622$} & {\color{red}$25.30$} & {\color{red}$0.799$} \\
    EDSR~\cite{Lim_2017_CVPR_Workshops} & 8 & $43.1$ & {\color{blue}$24.94$} & {\color{blue}$0.640$} & {\color{blue}$24.80$} & {\color{blue}$0.596$} & {\color{blue}$22.47$} & {\color{blue}$0.620$} & {\color{blue}$ 24.58$} & {\color{blue}$0.778$} \\
  \end{tabular}
}
\end{table*}

\begin{defn}[Activation Gain]
    The \emph{gain} of an activation function $\sigma$ is given by $G(x)[i]=\sigma(x[i])/x[i]$ and equal to $1$ if $x[i]=0$.
\end{defn}
\begin{figure}
  \centering
  \includegraphics[width=\linewidth]{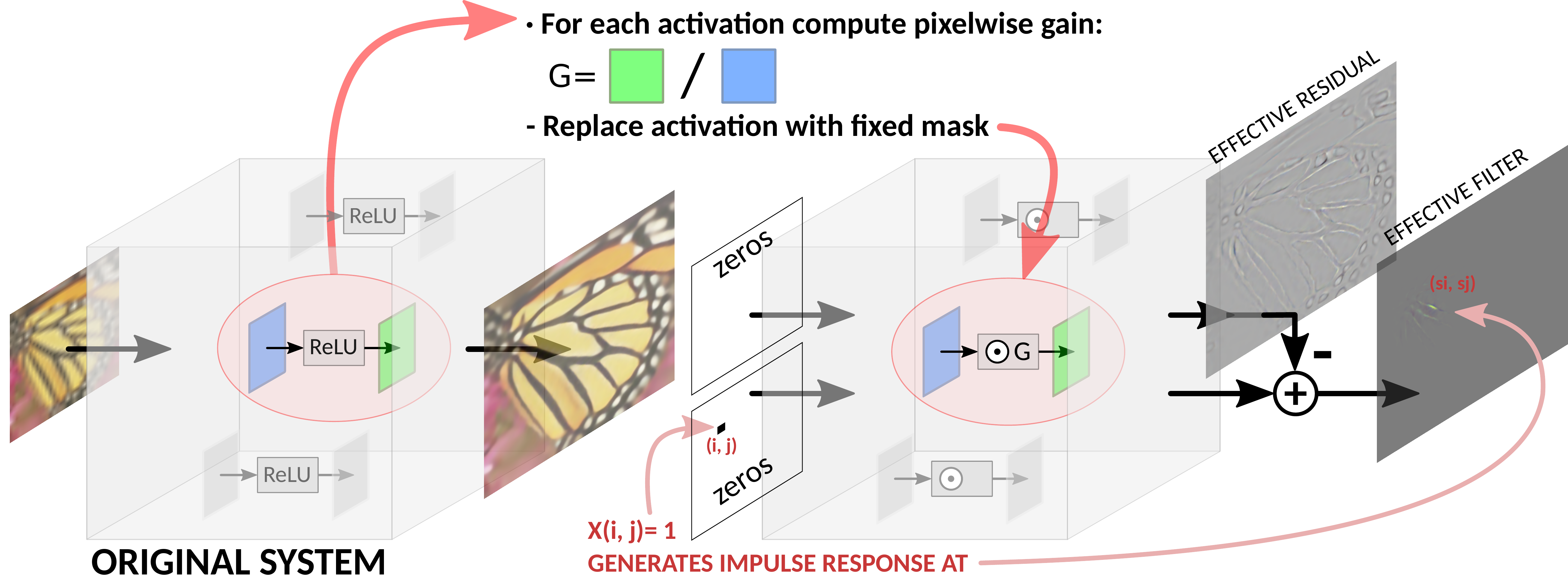}
  \caption{Activation freezing procedure to convert a network into a linear system. We can perform impulse response analysis to study the overall filter effect at each pixel location. \label{fig:vis_diagram}}
\end{figure}
\begin{figure}
  \centering
  \includegraphics[width=\linewidth]{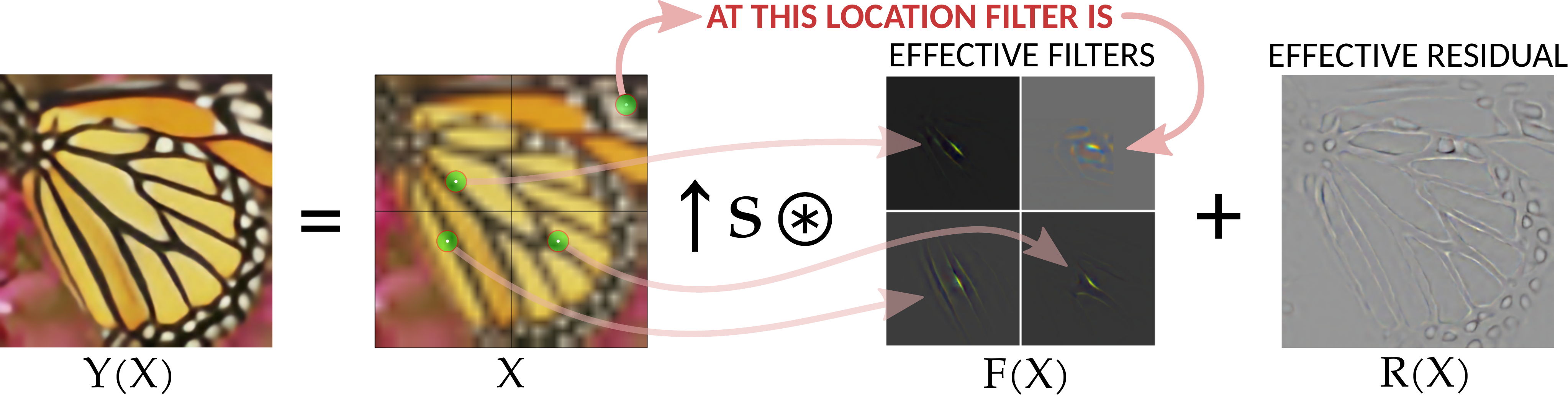}
  \caption{Interpretation of the upscaling network as a standard upscaler with adaptive filters. \label{fig:vis_interpretation}}
\end{figure}

\begin{thm}[Activation Freeze] \label{thm:act_freeze}
Let $\hat{W}_n = G(z_n) W_n$ and $\hat{b}_n = G(z_n) b_n$. Let
\begin{equation}
    Q_0 = I \;, \quad Q_i = \prod_{k=n-i+1}^n \hat{W}_k, \quad\text{for } i=1,\ldots,n \;.
\end{equation}
The output of the convolutional network is given by $x_n = F \; x_0 + R$ , where $F = Q_n$ is the \textbf{effective filter} and $R = Q \ast \hat{b} = \sum_{k=0}^n Q_k \hat{b}_{n-k}$ is the \textbf{effective residual}.
\end{thm}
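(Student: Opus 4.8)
The plan is to freeze each activation into the layer preceding it, turning the whole network into an affine recursion, and then unroll that recursion. The first step is to observe that the Activation Gain is defined precisely so that $\sigma$ acts componentwise as multiplication by a diagonal matrix: writing $G(z)$ also for $\mathrm{diag}(G(z)[1],G(z)[2],\dots)$, we have $\sigma(z)=G(z)\,z$ for every $z$, since on coordinates with $z[i]\neq 0$ this is the definition of $G$, and on coordinates with $z[i]=0$ both sides vanish by the stated convention. Substituting the layer model \eqref{eq:cnn_model} then gives
\[
x_n=\sigma(z_n)=G(z_n)\,z_n=G(z_n)\bigl(W_n x_{n-1}+b_n\bigr)=\hat W_n\,x_{n-1}+\hat b_n .
\]
For a fixed input image $X$ the diagonal matrix $G(z_n)$ is a constant (it depends on $X$ through $z_n$, but that dependence is exactly what ``freezing'' allows), so this is an honest affine relation between consecutive features.

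The second step is to unroll, by induction on the number of composed layers. The statement to prove inductively is
\[
x_n=\Bigl(\prod_{k=n-m+1}^{n}\hat W_k\Bigr)x_{n-m}+\sum_{i=0}^{m-1}\Bigl(\prod_{k=n-i+1}^{n}\hat W_k\Bigr)\hat b_{n-i},
\]
with the convention that an empty product equals $I$; the $m=1$ case is the affine relation from the first step, and the inductive step merely substitutes $x_{n-m}=\hat W_{n-m}x_{n-m-1}+\hat b_{n-m}$ and relabels indices. Taking $m=n$ gives $x_{n-m}=x_0$, and by definition $\prod_{k=n-i+1}^{n}\hat W_k=Q_i$ (with $Q_0=I$ matching the empty product), so
\[
x_n=Q_n\,x_0+\sum_{i=0}^{n-1}Q_i\,\hat b_{n-i}.
\]

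The third step is to read off the conclusion. The multiplier of $x_0$ is $Q_n$, which is the effective filter $F$; the remaining constant term is the effective residual, and since the input layer carries no bias we may set $\hat b_0=0$ so that the $i=n$ term vanishes and the sum can be written in the ``convolutional'' form $R=\sum_{k=0}^{n}Q_k\hat b_{n-k}=Q\ast\hat b$. As a consistency check, $X=0$ yields $x_n=R$, which justifies recovering $R$ from a zero input, and subtracting $R$ from the response to a $\delta$ input returns a single column of $F$, i.e. the impulse response described before the statement.

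The main obstacle here is really bookkeeping rather than mathematics: one must handle the componentwise identity $\sigma(z)=G(z)\,z$ carefully, including the $z[i]=0$ case, and must match the index range of $\prod_{k=n-i+1}^{n}$ to the order in which the $\hat W_k$ compose (higher layer index multiplying on the left, since later layers act last). Once these conventions are pinned down, everything after the first display is the routine unrolling of an affine recursion.
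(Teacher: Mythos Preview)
Your proof is correct and follows exactly the approach the paper sketches: the paper's own proof is the single sentence ``a direct consequence of $\sigma(x)=G(x)x$ and expansion of \eqref{eq:cnn_model},'' and you have simply carried out that expansion carefully, including the bookkeeping on product order and the $\hat b_0=0$ convention. The only implicit assumption worth flagging is that $\sigma(0)=0$ is needed for the $z[i]=0$ case of $\sigma(z)=G(z)z$, but the paper relies on this tacitly as well.
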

The proof of the theorem is a direct consequence of $\sigma(x) = G(x) x$ and expansion of \eqref{eq:cnn_model}. Although this theorem only applies to sequential networks, it helps to show that the overall effective filter depends on all convolutional filters as well as biases (only through activations). Similarly, the effective residual depends on both convolutional filters and biases (both explicitly and through activations).

For the sake of simplicity, here we use our visualization technique from input to outputs of the network. Nevertheless, we note that we could start at any layer and stop at any posterior layer to study attributions within the network.

\section{Experiments}
\label{sec:experiments}
We use one \textbf{single configuration} to test our system for $2\times$, $4\times$, and $8\times$ upscaling. We configure the \emph{Analysis}, \emph{Synthesis}, \emph{Upscale} and \emph{Downscale} modules in Figure \ref{fig:sys_diagram} using $4$--layer dense networks \cite{huang2017densely}  as filter--blocks. We use $48$ features and growth rate $16$ within dense networks. For classic upscaler we start with Bicubic and the upscaling filters are set as parameters to learn during training.

For \textbf{training} the system we fix the number of back--projections to $\mu=2$ (W--cycle according to Figure \ref{fig:sys_workflow}). We train the system for $8\times$ upscaler with the multi--scale loss function introduced in \cite{MSLapSRN}:
\begin{equation}
    \mathcal{L}(Y,X;\theta)=\sum_{L=1,2,3} \sum_{k=1}^L \mathbb{E}\left[ ||Y^L_k-X_k||_1 \right] \;.
\end{equation}
We train our system with Adam optimizer and a learning rate initialized as $10^{-3}$ and square root decay. We use $128\times128$ patches (pieces of images) with batch size $16$. The patches were sampled randomly from datasets DIV2K and Flickr2K, containing photographs of general natural scenes.

\subsection{Comparisons}
\label{ssec:comparisons}
\begin{figure*}[h!]
  \centering
  \includegraphics[width=\linewidth]{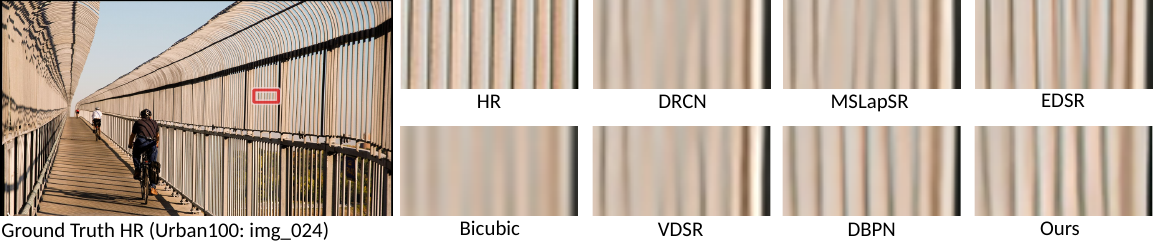}
  \includegraphics[width=\linewidth]{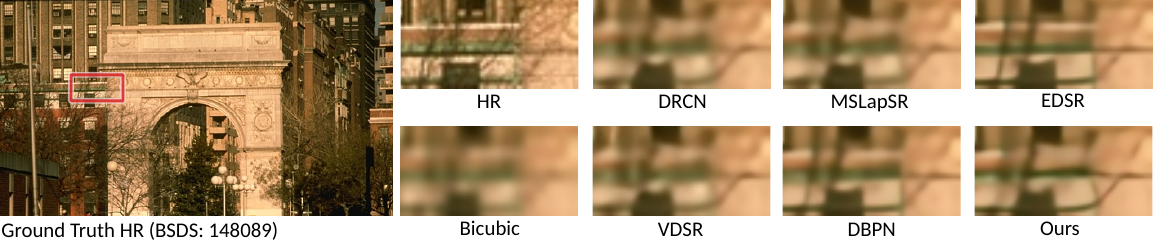}
  \includegraphics[width=\linewidth]{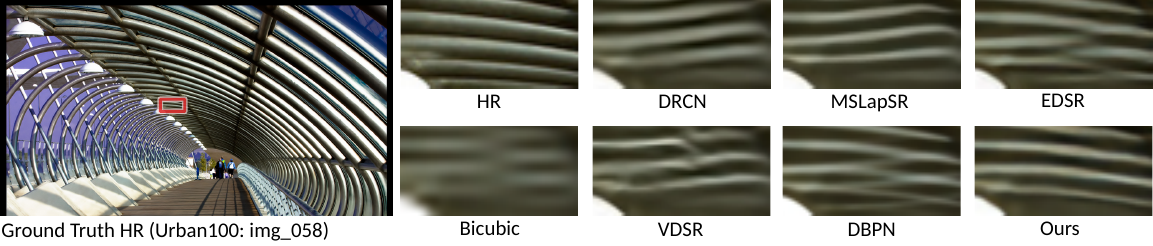}
  \caption{Perceptual evaluation of different SR methods for $4\times$ upscaling. \label{fig:perceptual_4x}}
\end{figure*}
\begin{figure*}[h!]
  \centering
  \includegraphics[width=0.9\linewidth]{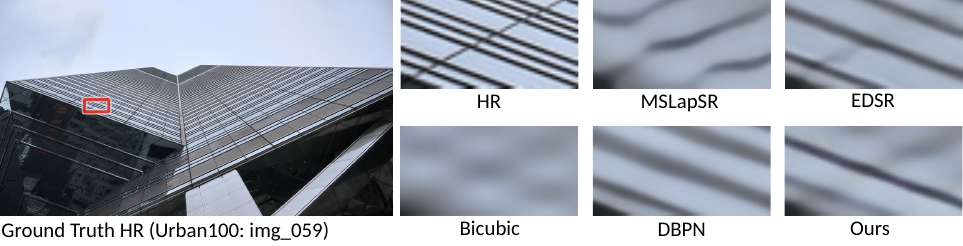}
  \caption{Perceptual evaluation of different SR methods for $8\times$ upscaling. \label{fig:perceptual_8x}}
\end{figure*}
In Table \ref{tab:quantitative} we compare PSNR and SSIM values for different methods. The two evaluation metrics measure the difference between an upscaler output and the original high-resolution image. Higher values are better in both cases. Roughly speaking, PSNR (range $0$ to $\infty$) is a log--scale version of mean--square--error and SSIM (range $0$ to $1$) uses image statistics to better correlate with human perception. Full expressions are as follows:
\begin{align}
    PSNR(X,Y) & = 10 \cdot \log_{10}\left(\frac{255^2}{MSE}\right) \;,\\
    SSIM(X,Y) & =\frac{(2\mu_X\mu_Y+c_1)(2\sigma_{XY}+c_2)}{(\mu_X^2+\mu_Y^2+c_1)(\sigma_X^2+\sigma_Y^2+c_2)} \;,
\end{align}
where $MSE=mean((X-Y)^2)$ is the mean square error of $X$ and $Y$; $\mu_X$ and $\mu_Y$ are the averages of $X$ and $Y$, respectively; $\sigma_X^2$ and $\sigma_Y^2$ are the variances of $X$ and $Y$, respectively; $\sigma _{XY}$ is the covariance of X and Y; $c_1=6.5025$ and $c_2=58.5225$.

Our method is outperformed only by EDSR and DBPN, that use $20$ to more than $100$ times the number of parameters of our system. Among systems with less than $2$ million parameters, our system obtains better results, only matched by MSLapSR for $2\times$ upscaling.

In Figure \ref{fig:scatter} we better visualize the difference in complexity between different systems. It is apparent from these results that the size of the system matters to improve PSNR values, with EDSR and DBPN far from other methods, but the cost can be overwhelming in performance. Our system clearly improves the state of the art for systems with less than two million parameters, with better quality and significantly less parameters.

\begin{figure}[h]
  \centering
  \includegraphics[width=\linewidth]{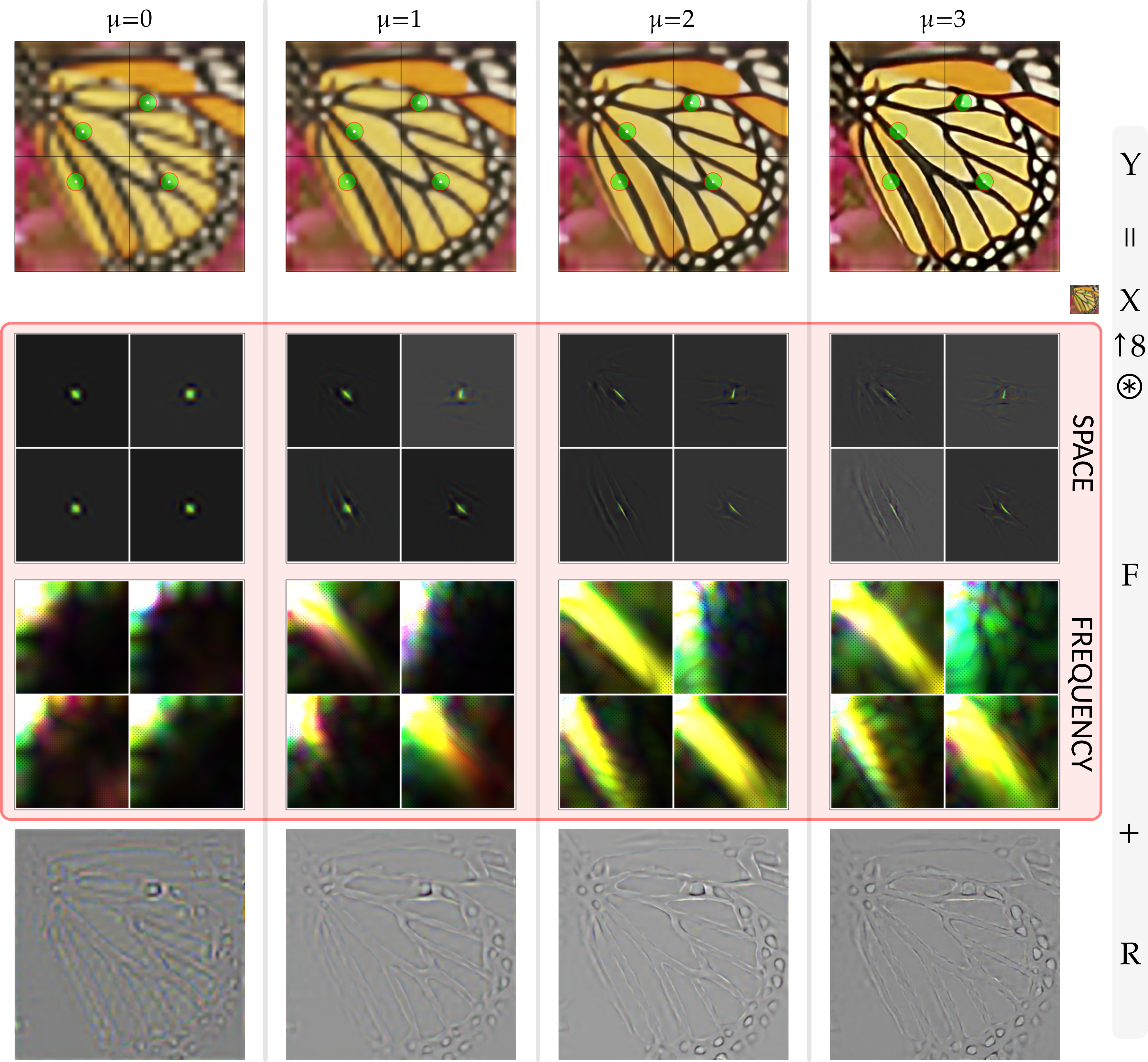}
    \caption{Output images for $8\times$ upscaling using different numbers of back--projections $\mu$. The network was trained with $\mu=2$ fixed. The effective filters are shown together with their frequency response (FFT). \label{fig:vis_sharp}}
\end{figure}
Figures \ref{fig:perceptual_4x} and \ref{fig:perceptual_8x} show the difference in perceptual quality for $4\times$ and $8\times$ upscaling. In general, EDSR and DBPN outputs look sharper and show consistent geometry. But often we find patches with aliasing, such as parallel lines, where the geometry becomes chaotic and perceptual quality behaves randomly. In such cases our system can take advantage compared to most of other systems.
\begin{figure}[ht]
  \centering
  \subfloat[$4\times$ results on Set14]{\includegraphics[width=0.9\linewidth]{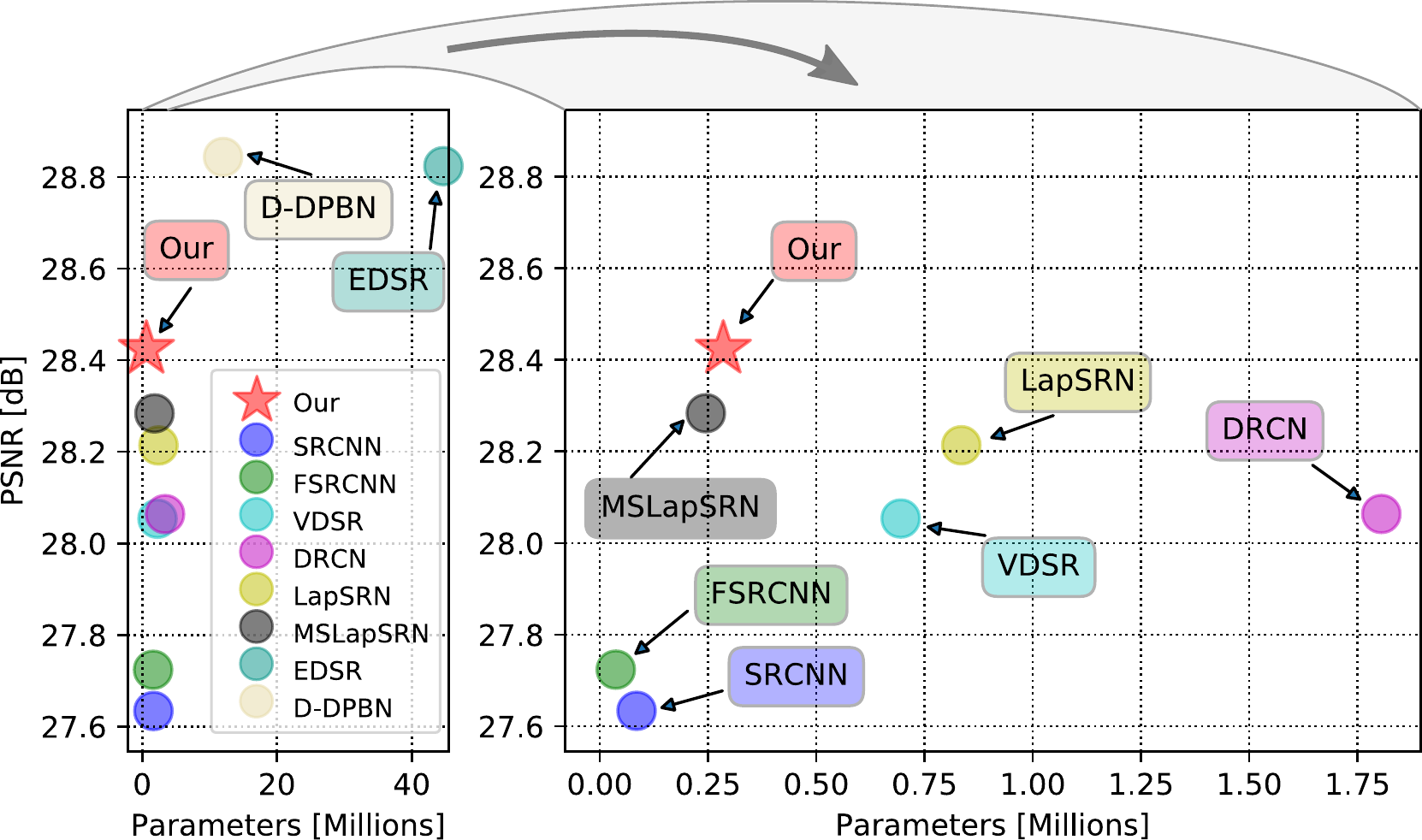}} \label{fig:scatter_4x} \\
  \subfloat[$8\times$ results on Set14]{\includegraphics[width=0.9\linewidth]{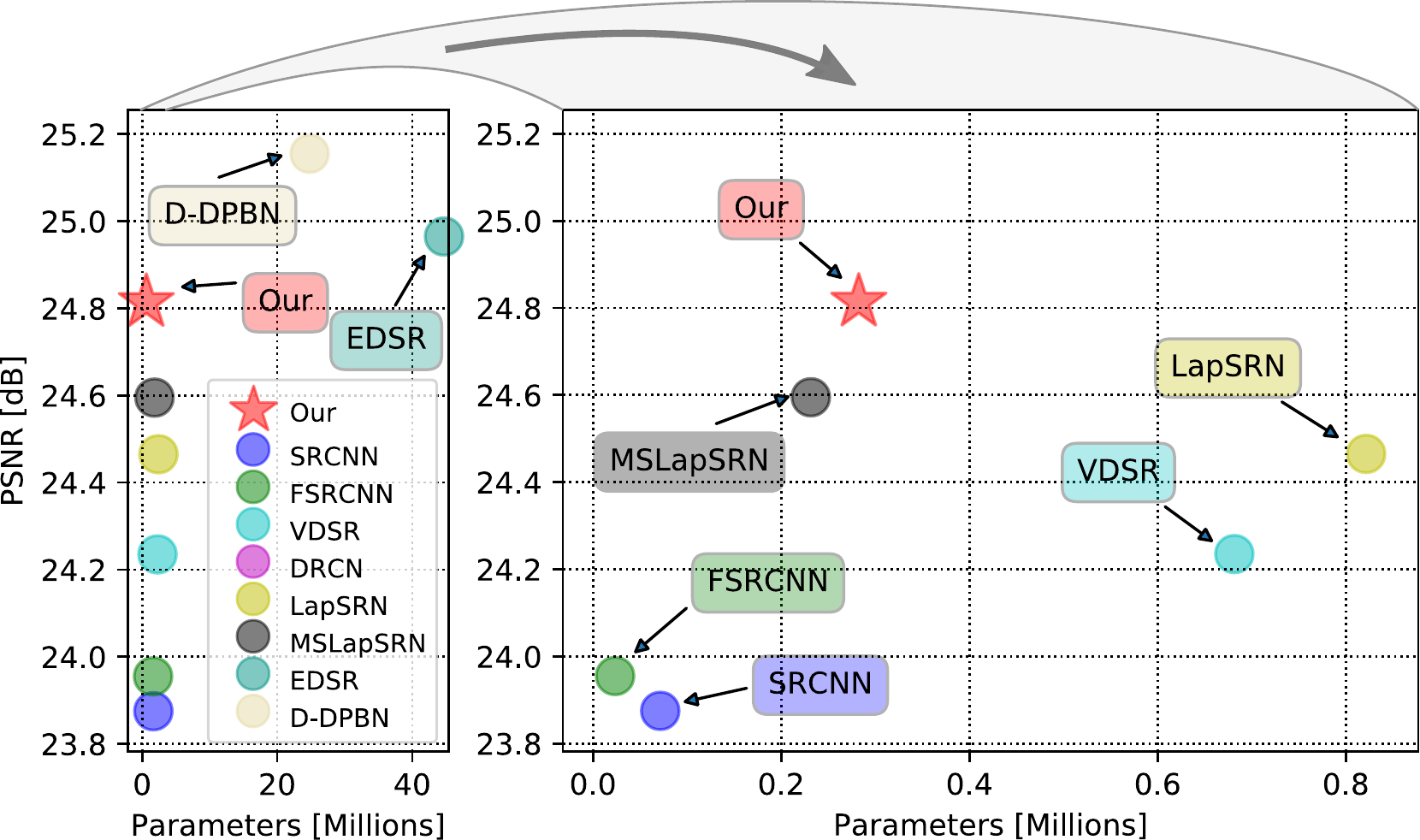}} \label{fig:scatter_8x}
    \caption{Quality vs Complexity of different SR methods. \label{fig:scatter}}
\end{figure}
\begin{figure*}[ht]
  \centering
  \includegraphics[width=\linewidth]{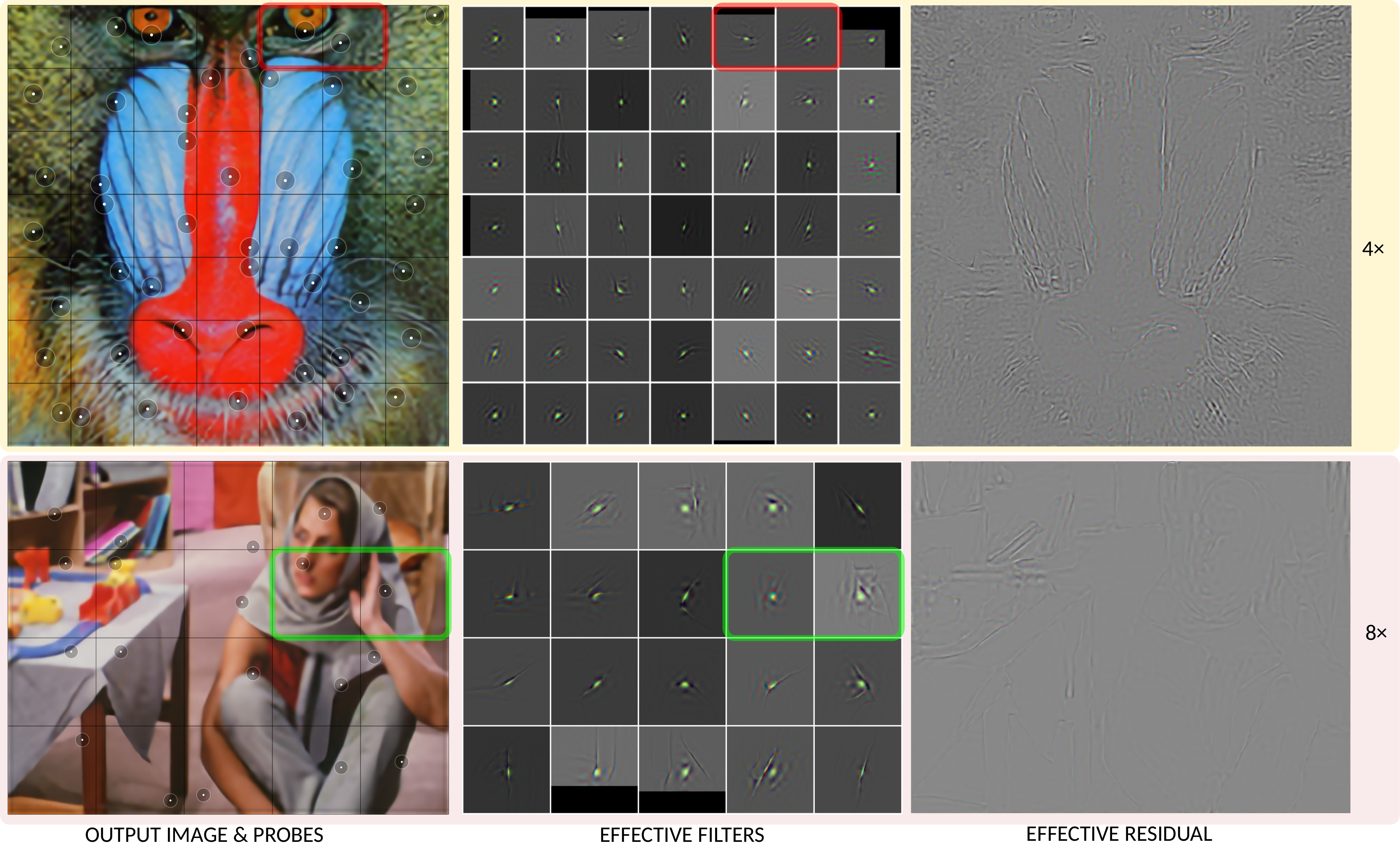}
    \caption{Effective filters and residual for $4\times$ and $8\times$ upscaler. \label{fig:vis_4x_8x}. Filters are not only directional but follow several features around a pixels. Residuals are in general small and thus most of the work is done by the filters.}
\end{figure*}

\subsection{Analysis}
\label{ssec:analysis}
Figure \ref{fig:vis_4x_8x} shows two examples of effective filters and residuals according to our novel visualization method. The shapes of filters following edges are reminiscent of the research on directional filter upscalers \cite{VRAlgazi_1991a,XLi_2001a}. In general, we observe that \textbf{filters are highly adaptive}. They become symmetric and narrow in flat areas (like Linear or Bicubic upscalers), directional close to sharp lines, and increase in size close to complex features. For example, in the $4\times$ upscale of Figure \ref{fig:vis_4x_8x} a red box shows an area with the eye of a baboon and the wrinkles around its eye. In the filter we can observe the shape of the eye and waves following the wrinkles. As expected, for $8\times$ upscaler the receptive field increases since the filters look bigger. We remind that we are using the same number of parameters, but passing through more layers. In the green box in \ref{fig:vis_4x_8x} we highlight the face and hand of a woman. The filter at the tip of the nose follows the face features, and the filter at the hand captures features from all the edges around.

Figure \ref{fig:vis_sharp} shows the effect of the depth on the effective filters and residuals. Here, training has tuned the network to work best with $\mu=2$ back--projections. For $\mu>2$ the output image looks oversharp, and too soft for $\mu<2$. Accordingly, filters are larger and extract more high--frequencies for $\mu>2$ and become smaller and low-pass for $\mu<2$. The same effect is observed with classic IBP \cite{Irani_1991a}, showing the effective design of our network architecture.

We observe that residuals are in general small and help fixing textures and small details. This is an indication that filters are doing most of the work for upscaling. We remind that, after freezing activations, the residual is a fixed component of the output that does not change with the input. Thus, residuals are very limited to estimate local details in the output as they depend only on activations for this purpose.

On the other hand, filters contain relations between neighboring pixels and we argue that because of this they are better to generalize. Effective filters also show all the details regarding the receptive field of the network. The receptive field is adaptive as the network does not use neighboring pixels if it does not need to (e.g. flat areas) and extend to large areas when the network is deep (e.g. $8\times$ upscaler) and local details are complex.

Finally, we remind that this analysis is precise. The network can be replaced by these adaptive filters plus residuals and it would give the exact same output. The strong dependency of filters and residuals on the input image shows all the non--linearities of the network. It is remarkable that convolutional networks can achieve the level of adaptivity revealed by these visualization experiments and it further justifies their success in super--resolution tasks.

\section{Conclusions}
\label{sec:conclusions}
We introduced a new architecture for single image super--resolution that reaches state of the art for methods with less that $2$ million parameters and a new technique to analyze the network. The analysis shows how the network learns to upscale by capturing complex relationships between pixels.

\bibliographystyle{aaai}
\bibliography{bibliography}

\onecolumn
\section{Supplementary Material}
\label{sec:appendix}
\subsection{Back--Projection Convergence}
\label{ssec:bp_convergence}

In the following analysis we consider vectors instead of images. This simplifies the analysis since we can use (rectangular) matrices $U$ and $D$ to represent \textbf{upsampling} and \textbf{downsampling} operators, respectively. Thus:
\begin{itemize}
    \item $Dy$ represents \textbf{downsampling} operation $(Y*g)\downarrow s$. Where $y=\text{vec}(Y)$ for a high resolution image $Y$.
    \item $Ux$ represents \textbf{upsampling} operation $(X\uparrow s)*p$. Where $x=\text{vec}(X)$ for a low resolution image $X$.
\end{itemize}

Our $L$--level downscaling model is now:
\begin{equation}
    x = D^{L}y \;.
\end{equation}
The target of back--projections is to recover this property in upscale images. Convergence is characterized by the mismatch error defined as follows:
\begin{defn}[Downscale mismatch error]
    The \emph{downscale mismatch error} of a high--resolution image $y_{L+1}$ at level $L+1$ is given by:
    \begin{equation}
        e(y_{L+1}) = x - D^L y_{L+1} \;.
    \end{equation}
\end{defn}
And the classic back--projection algorithm uses the following iteration:
\begin{defn}[Classic Back--projection]
    The one--level ($L=1$) back--projection iteration is given by:
    \begin{equation}
        y^{t+1}_2 = y^t_2 + Ue(y^t_2) \;.
    \end{equation}
\end{defn}

First, we recall the classic convergence theorem \cite{Irani_1991a}, in a simplified version adapted from \cite{Dai_2007a}.

\begin{thm}[Classic Back--Projection Convergence] \label{thm:classic_bp}
The one--level ($L=1$) back--projection iteration converges, $Dy\rightarrow x$, at exponential rate if the following condition holds:
\begin{equation}
    ||I-DU||_1 < 1 \;,
\end{equation}
where $||A||_1=\max_j \sum_i |a_{ij}|$.
\end{thm}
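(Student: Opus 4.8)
The plan is to track the evolution of the downscale mismatch error $e(y_2^t) = x - D y_2^t$ under the iteration and show that it contracts by the factor $\|I - DU\|_1 < 1$ at every step. First I would substitute the update rule $y_2^{t+1} = y_2^t + U e(y_2^t)$ into the definition of the mismatch error and apply the downsampling operator $D$, which collapses the iteration into a linear recursion purely in terms of the error:
\begin{equation}
e(y_2^{t+1}) = x - D\left(y_2^t + U e(y_2^t)\right) = e(y_2^t) - DU\,e(y_2^t) = (I - DU)\,e(y_2^t) \;.
\end{equation}
Iterating this gives $e(y_2^t) = (I - DU)^t\, e(y_2^0)$ in closed form.

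Second I would invoke submultiplicativity of the operator norm induced by the vector $1$--norm, namely $\|Av\|_1 \leq \|A\|_1 \|v\|_1$ and $\|AB\|_1 \leq \|A\|_1\|B\|_1$, which are standard facts about $\|A\|_1 = \max_j \sum_i |a_{ij}|$ and which I would state with at most a one--line justification. Applying them to the closed form yields $\|e(y_2^t)\|_1 \leq \|I - DU\|_1^{\,t}\, \|e(y_2^0)\|_1$. Under the hypothesis $\|I - DU\|_1 < 1$ the right--hand side decays geometrically in $t$, so $\|e(y_2^t)\|_1 \to 0$ at exponential rate, i.e. $D y_2^t \to x$, which is the claim.

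There is essentially no hard step here: once the error recursion is written down, the result is a one--line contraction argument. The only subtlety worth a remark is that the theorem asserts convergence of $D y_2^t$ to $x$, not of the high--resolution iterates $y_2^t$ themselves — since $U$ maps into a proper subspace and $D$ typically has a nontrivial kernel, $I - DU$ need not be a contraction on the full high--resolution space, so one can only conclude that the \emph{downscaled} iterates converge, which is exactly the property back--projection is designed to enforce. I would close by noting that this $L=1$ statement is the base case from which the multi--level MGBP convergence (Section A of the supplementary material) is bootstrapped through the recursion in Algorithm \ref{alg:mgbp_classic}.
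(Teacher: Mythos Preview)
Your proposal is correct and matches the paper's own proof essentially line for line: derive the error recursion $e(y_2^{t+1}) = (I-DU)\,e(y_2^t)$ by substituting the update rule, then bound via the induced $1$--norm (the paper phrases this step as ``H\"older's inequality'' rather than submultiplicativity, but it is the same estimate). Your additional remarks on the closed form, on why only $Dy_2^t$ and not $y_2^t$ is claimed to converge, and on the role as base case for the multi--level result are all accurate and go slightly beyond what the paper writes out.
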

\begin{proof}
    \begin{eqnarray}
        e(y^{t+1}_2) & = & x - D y^{t+1}_2 \\
        & = & x - D (y^t_2 + Ue(y^t_2)) \\
        & = & \underbrace{x - D y^t_2}_{e(y^t_2)} - DUe(y^t_2) = (I-DU) e(y^t_2) \;,
    \end{eqnarray}
    By H\"{o}lder's inequality we get $||e(y^{t+1}_2)||_1 \leqslant ||I-DU||_1 \; ||e(y^t_2)||_1$ which proves the theorem.
\end{proof}

Now, we rewrite the MGBP algorithm using matrix/vector notation:

\begin{algorithm*}[th]
    \centering
    \begin{tabular}{ll}
        $MGBP(x)$: & $BP^{\mu}_{k}(u, y_1,\ldots,y_{k-1})$: \\

        \resizebox{.5\textwidth}{!}{
            \begin{minipage}{.6\textwidth}
                \begin{algorithmic}[1]
                    \REQUIRE Input image $x$.
                    \REQUIRE Integers $\mu\geqslant 0$ and $L\geqslant 1$.
                    \ENSURE Images $y_k$, $k = 2,\ldots,L$.

                    \STATE $y_1 = x$
                    \FOR{$k = 2,\ldots,L$}
                        \STATE $u = U y_{k-1}$
                        \STATE $y_{k} = BP^{\mu}_{k}\left(u, y_1,\ldots,y_{k-1} \right)$
                    \ENDFOR
                \end{algorithmic}
        \end{minipage}
        }
        &
        \resizebox{.5\textwidth}{!}{
            \begin{minipage}{0.6\textwidth}
                \begin{algorithmic}[1]
                    \REQUIRE Image $u$, level index $k$, number of steps $\mu$.
                    \REQUIRE Images $y_1,\ldots,y_{k-1}$ (only for $k>1$).
                    \ENSURE Updated image $u$

                    \IF{$k > 1$}
                        \FOR{$step = 1,\ldots,\mu$}
                            \STATE $d = BP^{\mu}_{k-1}\left( D u, y_1,\ldots,y_{k-2} \right)$
                            \STATE $u = u + U(y_{k-1} - d)$
                        \ENDFOR
                    \ENDIF
                \end{algorithmic}
            \end{minipage}
        }
    \end{tabular}
    \caption{Multi--Grid Back--Projection (MGBP) algorithm in matrix/vector notation.} \label{tab:alg}
    \label{alg:mgbp_classic}
\end{algorithm*}

\begin{defn}[single--level mismatch error]
    Assume that the MGBP algorithm runs $\mu$ back--projection steps from level $k=2,\ldots,L$. Then, the \emph{single--level mismatch error} for a $L+1$--resolution image $y_{L+1}$ is given by:
    \begin{equation}
        q(y_{L+1}) = y^\mu_L - D y_{L+1}
    \end{equation}
\end{defn}

This is usefull since the MGBP iteration can be written as $y^{t+1} = y^t + Uq(y^t)$. It easily follows that:
\begin{eqnarray}
    q(y^{t+1}_{L+1}) & = & y^\mu_L - D y^{t+1}_{L+1} \\
    & = & y^\mu_L - D (y^t_{L+1}+Uq(y^t_{L+1})) \\
    & = & (I-DU) q(y^t_{L+1})
\end{eqnarray}

\begin{thm}[MGBP Convergence] \label{thm:mgbp}
The MGBP iteration converges, $D^L y_{L+1}\rightarrow x$, at exponential rate if the following condition holds:
\begin{equation}
    ||I-DU||_1 < 1 \;.
\end{equation}
\end{thm}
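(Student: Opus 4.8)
The plan is to exploit the identity $q(y^{t+1}_{L+1}) = (I-DU)\,q(y^t_{L+1})$ already established above, which makes the top--level MGBP iteration $y^{t+1}=y^t+Uq(y^t)$ formally identical to the one--level iteration analyzed in Theorem~\ref{thm:classic_bp}, and then to push convergence down through the resolution pyramid by induction on the number of levels $L$.

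First I would repeat the H\"{o}lder estimate used in the proof of Theorem~\ref{thm:classic_bp}: from $q(y^{t+1}_{L+1}) = (I-DU)\,q(y^t_{L+1})$ one obtains $\|q(y^t_{L+1})\|_1 \leqslant \|I-DU\|_1^{\,t}\,\|q(y^0_{L+1})\|_1$, so the hypothesis $\|I-DU\|_1<1$ forces the single--level mismatch error to vanish at exponential rate. Since $q(y^t_{L+1}) = y^\mu_L - D y^t_{L+1}$ with $y^\mu_L$ (the level--$L$ output of the MGBP recursion) independent of the top--level index $t$, this already gives $D y^t_{L+1}\to y^\mu_L$ geometrically with ratio $\|I-DU\|_1$.

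It then remains to connect this single downsampling step to the full $L$--step model $x = D^L y_{L+1}$, and here I would induct on $L$. The base case $L=1$ is exactly Theorem~\ref{thm:classic_bp}, which gives $D y^t_2\to x$. For the inductive step, note that $y^\mu_L$ is produced by the MGBP recursion run with one fewer level and low--resolution input $x$; by the inductive hypothesis that recursion enforces $D^{L-1} y^\mu_L = x$. Since $D$, hence $D^{L-1}$, is a bounded linear operator, writing $D^L y^t_{L+1} = D^{L-1}\bigl(D y^t_{L+1}\bigr)$ and subtracting gives
\[
\|D^{L} y^t_{L+1} - x\|_1 \;=\; \bigl\|D^{L-1}\bigl(D y^t_{L+1} - y^\mu_L\bigr)\bigr\|_1 \;\leqslant\; \|D^{L-1}\|_1\,\|I-DU\|_1^{\,t}\,\|q(y^0_{L+1})\|_1 ,
\]
so $D^L y^t_{L+1}\to x$ at exponential rate, the ratio $\|I-DU\|_1$ being inherited unchanged from the two--level case.

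The hard part will be making the inductive hypothesis ``$D^{L-1} y^\mu_L = x$'' precise: the lower levels perform only a finite number $\mu$ of back--projection steps, so $D^{L-1} y^\mu_L$ matches $x$ only up to an error that itself decays geometrically in $\mu$. To keep the argument clean I would either (a) state the result in the regime where the number of back--projection steps at every level tends to infinity, so the nested limits become exact, or (b) introduce the per--level residual $\varepsilon_k = \|D^{k-1} y^\mu_k - x\|_1$, show it obeys a recursion in $k$ driven by the contraction factor $\|I-DU\|_1<1$, and conclude that the total error $\|D^L y^t_{L+1}-x\|_1$ still decays geometrically in the overall back--projection budget. The remaining bookkeeping --- commuting $D$ through the updates and bounding the operator norm $\|D^{L-1}\|_1$, which stays bounded (indeed $\leqslant 1$ for a normalized kernel $g$) --- is routine.
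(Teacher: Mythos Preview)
Your proposal is correct and is essentially the paper's own argument: induction on $L$, the contraction $q(y^{t+1})=(I-DU)q(y^t)$ at the top level, and the decomposition $D^L y^t_{L+1}-x = D^{L-1}\bigl(Dy^t_{L+1}-y^\mu_L\bigr)+\bigl(D^{L-1}y^\mu_L-x\bigr)$. Your option~(b) is exactly what the paper does---it writes the above as $e(y^\mu_{L+1})=e(y^\mu_L)+D^{L-1}(I-DU)^{\mu+1}y^\mu_L$ and bounds the two pieces separately (the first by induction, the second by $\|I-DU\|_1<1$ and the boundedness of $\|y^\mu_L\|_1$), with the same $\mu$ running all levels rather than nested limits as in your option~(a).
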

\begin{proof}
    We follow by induction. From the classic convergence theorem we know that $||e(y^{t+1}_2)||_1 \leqslant ||I-DU||_1 \; ||e(y^t_2)||_1$. For $L=1$ the MGBP iteration is equivalent to classic back--projection, and so the theorem is verified for $L=2$.

    Next, we assume that the theorem holds for $L$ and we study the error for $L+1$:
    \begin{eqnarray}
        e(y^\mu_{L+1}) & = & x - D^L y^\mu_{L+1} \\
        & = & x - D^{L-1} (y^\mu_L - q(y^\mu_{L+1})) \\
        & = & x - D^{L-1} (y^\mu_L - (I-DU)^\mu q(y^0_{L+1})) \\
        & = & x - D^{L-1} (y^\mu_L - (I-DU)^\mu (y^\mu_L - DUy^\mu_L)) \\
        & = & \underbrace{x - D^{L-1} y^\mu_L}_{e(y^\mu_L)} + D^{L-1} (I-DU)^{\mu+1} y^\mu_L
    \end{eqnarray}

    By triangle inequality and H\"{o}lder's inequality we have:
    \begin{equation}
        ||e(y^\mu_{L+1})||_1 \leqslant ||e(y^\mu_L)||_1 + ||D^{L-1}||_1 ||(I-DU)^{\mu+1}||_1 ||y^\mu_L||_1 \;.
    \end{equation}
    The first term decreases exponentially by inductive assumption. The second term also decreases exponentially with $\mu$ if $||I-DU||_1 < 1$. The term $||y^\mu_L||_1$ is bounded since by inductive assumption we know that $y^\mu_L$ converges to a fixed $L$--level image.
\end{proof}

\end{document}